


\documentclass{article}
\pdfpagewidth=8.5in
\pdfpageheight=11in
\usepackage{ijcai21}

\usepackage{times}
\usepackage{soul}
\usepackage{url}
\usepackage[hidelinks]{hyperref}
\usepackage[utf8]{inputenc}
\usepackage[small]{caption}
\usepackage{graphicx}
\usepackage{amsmath}
\usepackage{amsthm}
\usepackage{booktabs}
\usepackage{algorithm}
\urlstyle{same}

\usepackage{latexsym}
\usepackage{soul}
\usepackage[utf8]{inputenc}
\usepackage{amsmath}
\usepackage{booktabs}
\usepackage{amsthm}
\usepackage{thmtools}
\usepackage{thm-restate}

\usepackage[utf8]{inputenc} 
\usepackage[T1]{fontenc}    
\usepackage{url}            
\usepackage{booktabs}       
\usepackage{amsfonts}       
\usepackage{nicefrac}       
\usepackage{microtype}      
\usepackage{color}

\newtheorem{definition}{Definition}


\usepackage{helvet}  
\usepackage{courier}  
\usepackage{subfig}
\usepackage{enumitem}
\usepackage{multirow}
\frenchspacing  

\usepackage{mathtools}
\usepackage{balance}
\usepackage{flushend}

\graphicspath{{figures/}}

\newcommand{\cD}{\mathcal{D}}

\newcommand{\cR}{\mathcal{R}}
\newcommand{\cS}{\mathcal{S}}

\newcommand{\cM}{\mathcal{M}}

\usepackage{amssymb}
\usepackage{tabularx}
\usepackage{hhline}
\usepackage{algpseudocode}
\usepackage{array}
\usepackage{listings}
\usepackage{diagbox}
\usepackage{tikz}
\usepackage[utf8]{inputenc}
\usepackage{bm}
\usetikzlibrary{spy}

\usepackage{pgfplots}
\pgfplotsset{compat=newest}
\usepgfplotslibrary{groupplots}
\usepgfplotslibrary{dateplot}

\input
\usepackage{makecell}
\makeatletter
\def\BState{\State\hskip-\ALG@thistlm}
\makeatother
\pagestyle{plain}
\usepackage{setspace}

\makeatletter
\def\Cline#1#2{\@Cline#1#2\@nil}
\def\@Cline#1-#2#3\@nil{%
  \omit
  \@multicnt#1%
  \advance\@multispan\m@ne
  \ifnum\@multicnt=\@ne\@firstofone{&\omit}\fi
  \@multicnt#2%
  \advance\@multicnt-#1%
  \advance\@multispan\@ne
  \leaders\hrule\@height#3\hfill
  \cr}
\makeatother

\makeatletter
\def\BState{\State\hskip-\ALG@thistlm}
\makeatother
\pagestyle{plain}

\usepackage{colortbl}

\newcommand{\fade}[1]{\textcolor{gray}{#1}}
\usepackage[switch]{lineno}

\newcommand*\system{\textsc{FedMD-NFDP}}
\newcommand*\LDP{\textsc{FedMD-LDP}}
\newcommand*\PATE{\textsc{PATE}}
\newcommand*\DPSGD{\textsc{DP-SGD}}
\pdfinfo{
/TemplateVersion (IJCAI.2021.0)
}

\title{Federated Model Distillation with Noise-Free Differential Privacy}

\author{
Lichao Sun$\textsuperscript{\rm 1}$
\and
Lingjuan Lyu$\textsuperscript{\rm 2}$\thanks{Equal contribution. Order determined by coin toss.}
\affiliations
$\textsuperscript{\rm 1}$ Lehigh University, $\textsuperscript{\rm 2}$ Ant Group
\emails
lis221@lehigh.edu, lingjuanlvsmile@gmail.com
}
\begin{document}

\maketitle
\begin{abstract}
Conventional federated learning directly averages model weights, which is only possible 
for collaboration between models with homogeneous architectures. Sharing prediction instead of weight removes this obstacle and eliminates the risk of white-box inference attacks in conventional federated learning. However, the predictions from local models are sensitive and would leak training data privacy to the public. 
To address this issue, one naive approach is adding the differentially private random noise to the predictions, which however brings a substantial trade-off between privacy budget and model performance. 
In this paper, we 
propose a novel framework called \system, which applies a Noise-Free Differential Privacy (NFDP) mechanism into a federated model distillation framework. 
Our extensive experimental results on various datasets validate that \system\ can deliver not only comparable utility and communication efficiency but also provide a noise-free differential privacy guarantee. We also demonstrate the feasibility of our \system\ by considering both IID and non-IID setting, heterogeneous model architectures, and unlabelled public datasets from a different distribution. 
\end{abstract}

\section{Introduction}
Federated learning (FL) provides a privacy-aware paradigm of model training, which allows a multitude of parties to construct a joint model without directly exposing their private training data~\cite{mcmahan2017communication,bonawitz2017practical,xu2021fedmood}. 
Nevertheless, recent works have demonstrated that FL may not always provide sufficient privacy guarantees, as communicating model updates throughout the training process can nonetheless reveal sensitive information~\cite{bhowmick2018protection,melis2019exploiting}.

In order to protect training data privacy in FL, various privacy protection techniques have been proposed in the literature~\cite{geyer2017differentially,mcmahan2018learning,bonawitz2017practical,wang2019collecting,zhao2020local,sun2020ldp,lyu2020privacy}. From the perspective of differential privacy, most works focus on the centralized differential privacy (CDP) that requires a central trusted party to add noise to the aggregated gradients~\cite{geyer2017differentially,mcmahan2018learning}. Moreover, these works are geared to tackle thousands of users for training to converge and achieve an acceptable trade-off between privacy and accuracy~\cite{mcmahan2018learning}, resulting in a convergence problem with a small number of parties.

To achieve stronger privacy protection, a few recent works start to integrate local differential privacy (LDP) into federated learning. 
However, most existing approaches 
can only support shallow models such as logistic regression and only focus on simple tasks and datasets~\cite{wang2019collecting,zhao2020local}. \cite{bhowmick2018protection} presented a viable approach to large-scale local private model training. Due to the high variance of their mechanism, it requires more than 200 communication rounds and incurs much higher privacy cost, i.e., MNIST ($\epsilon = 500$) and CIFAR-10 ($\epsilon = 5000$). A recent work~\cite{sun2020ldp} utilized Local Differential Privacy (LDP) into federated learning. However, in order to achieve a reasonable privacy budget, it uses a splitting and shuffling mechanism that split all parameters of a single model and send them individually to the cloud. This special communication requires tons of communication between clients and clouds. 

\begin{figure*}[!htp]
\centering
\subfloat[\LDP]{\includegraphics[height=1.7in,width=3.37in]{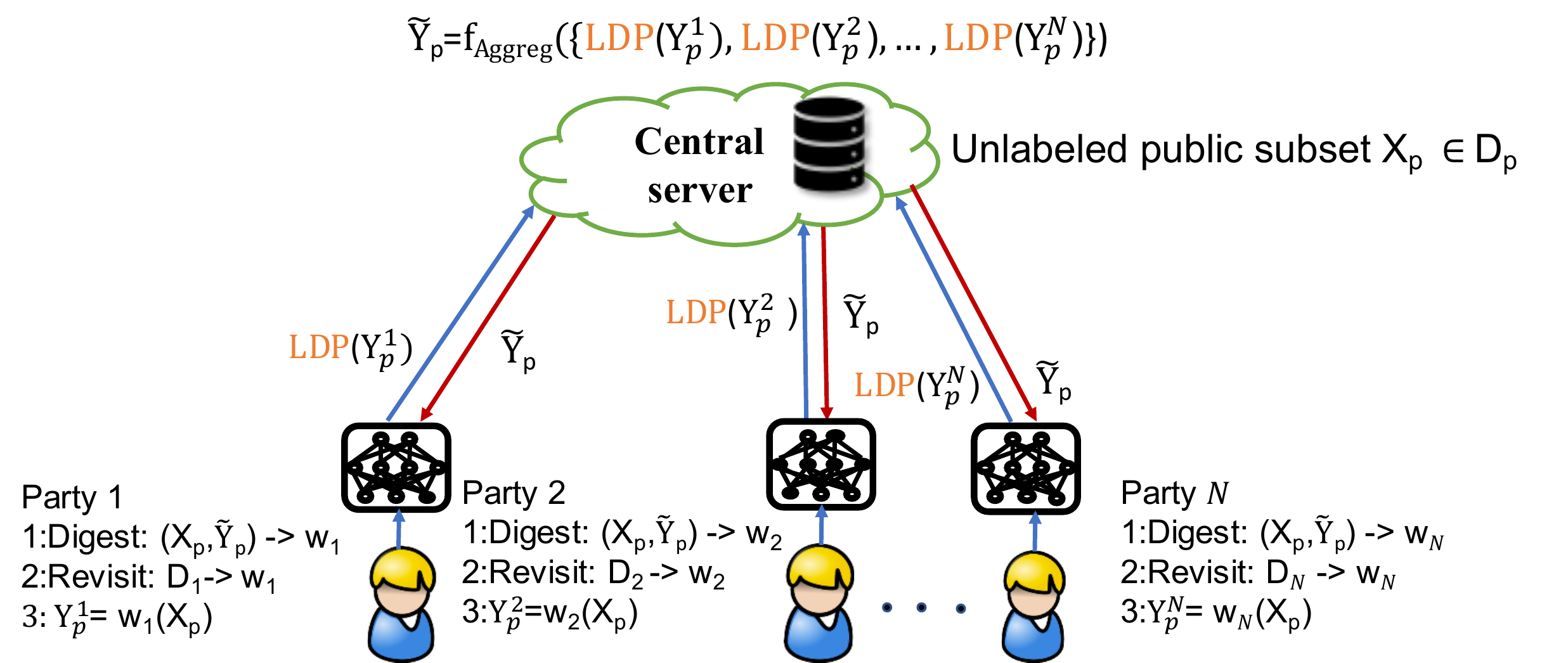}} \ \ \
\subfloat[\system]{\includegraphics[height=1.7in,width=3.37in]{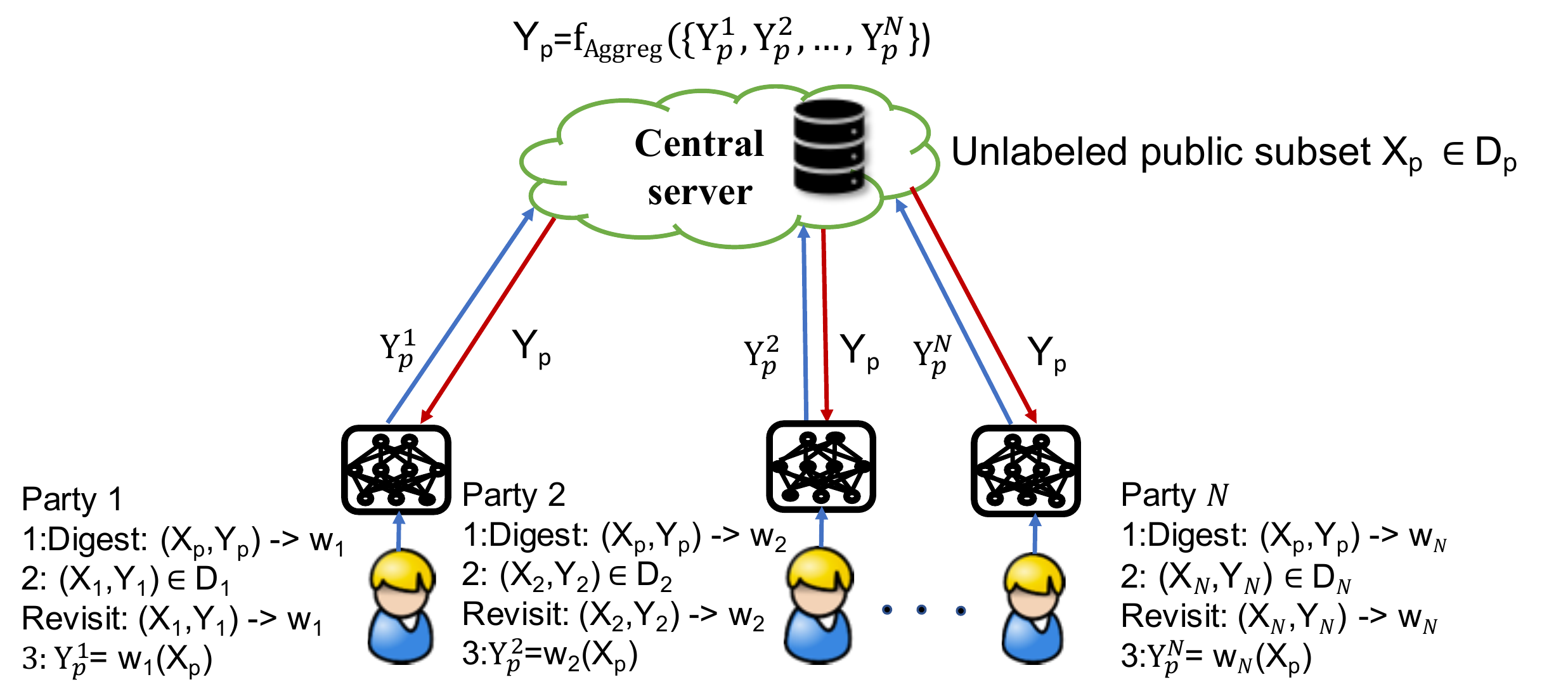}}
\caption{Overview of the naive \LDP\ and our \system\ in each communication round. Each party first updates its model $w_i$ to approach the consensus on the public dataset (Digest); then updates its model $w_i$ on its own sampled subset from private local data (Revisit). Note that the sampled subset $(X_i,Y_i) \in D_i$ is fixed in all rounds.}\label{fig:CFL_FedMD}
\end{figure*}

All the above works considered privacy issues in conventional FL that requires parties to share model weights. Compared with sharing prediction via knowledge transfer, conventional FL system~\cite{mcmahan2017communication,mcmahan2018learning} suffers from several intrinsic limitations: 
(1) it requires every party to share their local model weights in each round, thus limiting only to models with homogeneous architectures; 
(2) sharing model weight incurs a significant privacy issue of local model, as it opens all the internal state of the model to white-box inference attacks; 
(3) model weight is usually of much higher dimension than model predictions, resulting in huge communication overhead and higher privacy cost. 

Inspired by the knowledge transfer algorithms~\cite{bucilu2006model,hinton2015distilling}, \emph{Federated Model Distillation} (FedMD) shares the knowledge of FL parties' models via their predictions on an unlabeled public set \cite{li2019fedmd}. However, sharing prediction may still leak the privacy of the local data~\cite{papernot2017semi}. Currently, there is no reasonable privacy guarantee for sharing model prediction in FL. A naive approach is 
to add the differentially private random noise perturbation to the predictions of local models. However, prior works have shown the significant trade-off between privacy budget and model performance \cite{bhowmick2018protection}. 
We fill in the above gaps and make the following contributions:
\begin{itemize}
    \item We propose \system, a novel federated model distillation framework with the new proposed noise-free differential privacy (NFDP) mechanism that guarantees each party's privacy without explicitly adding any noise. 
    \item We formally prove that NFDP with both replacement and without replacement sampling strategies can inherently ensure $(\epsilon,\delta)$-differential privacy, eliminating noise addition and privacy cost explosion issues explicitly in previous works.
    \item Extensive experiments on benchmark datasets, various settings (IID and non-IID data distribution), and heterogeneous model architectures, demonstrate that \system\ achieves comparable utility with only a few private samples that are randomly sampled from each party, validating the numerous benefits of our framework.
\end{itemize}
We remark that sampling can individually serve as a privacy amplification method to tighten the privacy budget of a differentially private algorithm \cite{balle2018privacy}, which is
in sharp contrast with the inherent privacy guarantee of sampling given in this paper. 

\section{Preliminary}
\paragraph{Differential Privacy.}
DP has become a de facto standard for privacy analysis. DP can either be enforced in a ``local" or ``global" sense depending on whether the server is trusted. For FL scenarios where data are sourced from multiple parties, while the server is untrusted, DP should be enforced in a ``local" manner to enable parties to 
apply DP mechanisms before data publication, which we term as LDP. Compared with the global model via DP (CDP)~\cite{dwork2014algorithmic,abadi2016deep}, LDP offers a stronger level of protection. 

\begin{definition}\label{def:dp}
A randomized mechanism $\cM$: $\cD \to \cR$ with domain $\cD$ and range $\cR$ satisfies $(\epsilon,\delta)$-differential privacy if for all two neighbouring inputs $D,D'\in\cD$ and any measurable subset of outputs $S \subseteq \cR$ it holds that
\begin{eqnarray*}
\Pr\{\cM(D)\in S\} &\leq& \exp(\epsilon) \cdot \Pr\{\cM(D')\in S\} +
\delta\enspace.
\end{eqnarray*}
\end{definition}
A formal definition of record-level DP is provided in Def.~\ref{def:dp}, which bounds the effect of the presence or the absence of a record on the output likelihood within a small factor $\epsilon$. The additive term $\delta$ allows that the unlikely responses do not need to satisfy the pure $\epsilon$-DP criterion. In FL, each party can individually apply $\cM$ in Definition~\ref{def:dp} to ensure record-level LDP. 

\section{Federated Model Distillation with Noise-Free Differential Privacy}

Unlike the existing federated learning algorithms, such as FedAvg~\cite{mcmahan2017communication}, FedMD does not force a single global model onto local models. Instead, each local model is updated separately. To support heterogeneous model architectures, we assume that an unlabeled public dataset is available, then parties share the knowledge that they have learned from their training data (their model predictions) in a succinct, black-box and model agnostic manner. To protect local model predictions, each party can explicitly apply LDP mechanisms by adding noise to their local model predictions, as shown in \LDP\ (Figure~\ref{fig:CFL_FedMD} (a)), or adopt data sampling before training, which inherently ensures LDP of the sampled subset, and the follow-up local model predictions as per the post-processing property of DP~\cite{dwork2014algorithmic}, as demonstrated in \system (Figure~\ref{fig:CFL_FedMD} (b)). 

It should be noted that \LDP\ requires each party to explicitly inject noise to ensure DP individually before releasing their local model knowledge to the server. The privacy cost will accumulate as per the dimension of the shared knowledge ($|Y_p|*class$), as well as the communication rounds, resulting in huge privacy costs. Here $|Y_p|$ is the number of the chosen public set and $class$ refers to the class number. In contrast, our \system\ inherently ensures that the released local model knowledge by each party is differentially private via random data sampling process, as indicated in Theorem~\ref{theorem_sampling1} and Theorem~\ref{theorem_sampling2}.

\begin{algorithm}[t]
\caption{\system.~Initialization phase does not involve collaboration. $D_i$ and $w_i$ are local dataset and model parameters from $i$-th party. $Y_p^i[t]$ is the prediction from $i$-th party on the chosen public subset $X_p \in D_p$ in round $t$.}
\label{alg:FedMD-NF}
\begin{algorithmic}[1]
\State \textbf{\hskip 8em Initialization phase}
\State Initializes each party $i \in [N]$ with the same pretrained model, selects subset $(X_i,Y_i) \in D_i$, and updates their weights in parallel:
	\For{$t\in[T_1]$ epochs} 
	\State Update $w_i \leftarrow$
	 \textsc{\textsf{\small{Train}}} $(w_i,X_i,Y_i)$
	\EndFor
	\State Server randomly samples a public subset $X_p[0] \in D_p$
	\State Send $Y_p^{i}[0] = \textsc{\textsf{\small{Predict}}}(w_i;X_p[0])$ to the server
\vspace{1em}
\State\textbf{\hskip 8em Collaboration phase}
\vspace{.3em}
\State $Y_p[0] = f_\mathsf{Aggreg}(\{Y_p^{i\in[N]}[0]\})$
\Comment{\fade{Initial aggregation at the server}}

\For{$t\in[R]$ communication rounds}
    \State Server randomly samples a public subset $X_p[t+1] \in D_p$
	\For{$i\in[N]$ parties} \Comment {\fade{Each party updates local weight $w_i$ in parallel}}
	    \For{$j\in[T_2]$ epochs} 
        \State Digest: $w_i \leftarrow$ \textsc{\textsf{\small{Train}}} $(w_i,X_p[t],Y_p[t])$
        \EndFor
        
        \For{$j\in[T_3]$ epochs} 
		\State Revisit: $w_i \leftarrow$ \textsc{\textsf{\small{Train}}} $(w_i,X_i,Y_i)$
        \EndFor
        
		\State Send $Y_p^{i}[t+1] = \textsc{\textsf{\small{Predict}}}(w_i;X_p[t+1])$ to the server 

        
	\EndFor
\State $Y_p[t+1] = f_\mathsf{Aggreg}(\{Y_p^{i\in[N]}[t+1]\})$
\Comment{\fade{Prediction aggregation at the server}}
\EndFor
\end{algorithmic}
\end{algorithm}

Algorithm~\ref{alg:FedMD-NF} describes our \system\ algorithm, which consists of two training phases: (1) during initialization phase, every party $i$ updates its local model weights $w_i$ on a randomly sampled subset $(X_i,Y_i) \in D_i$ from local private training data $D_i$ for $T_1$ times without any collaboration; (2) during collaboration phase, parties share the knowledge of their local models via their predictions on a subset of public data, $X_p$. In each round of the collaboration phase, the detailed procedure 
proceeds as follows:

\begin{itemize}
\item Each party uses its local model weights $w_i$ to compute prediction $Y_p^i$ for $X_p$ and shares them with the server. 
\item The server aggregates the predictions (separately for each public record), i.e., computes $Y_p=f_\mathsf{Aggreg}(Y_p^1,\cdots,Y_p^N)$, and sends $Y_p$ to all parties for the next round's local training; $f_\mathsf{Aggreg}$ is an aggregation algorithm, which is average function throughout this work.
\item Each party first updates its local model weights $w_i$ by training on the soft-labeled public data $(X_p,Y_p)$ to approach the consensus on the public dataset (Digest); then training on its 
previously sampled local subset 
(Revisit).
\end{itemize}
 
In addition, Algorithm \ref{alg:FedMD-NF} can also support the implementation of \LDP. The only difference is the output $Y_p^{i}[t+1]$ (line 19) should be perturbed by the differentirally private random noise, which can be randomly sampled from either Laplace or Gaussian distribution.

\section{Theoretical Analysis}
In this work, we consider record-level DP for each party.
Below, we formally stated that NFDP with random sampling from each party's training dataset 
satisfies differential privacy guarantee for each party. 
In particular, random sampling without replacement and with replacement are two most common sampling strategies, and we prove the $(\epsilon,\delta)$-differential privacy for both of them.

\begin{restatable}{theorem}{withoutreplacement}
[NFDP mechanism: $(\epsilon,\delta)$-differential privacy of sampling {without replacement}]
\label{theorem_sampling1}
Given a training dataset of size $n$, sampling without replacement achieves $(\ln{\frac{n+1}{n+1-k}},\frac{k}{n})$-differential privacy, where $k$ is the subsample size.
\end{restatable}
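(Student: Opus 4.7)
My plan is to treat the noise-free subsampler $\cM$, which returns a uniformly random size-$k$ subset of its input, as the mechanism itself and verify Definition~\ref{def:dp} by a direct combinatorial calculation. I would adopt the add/remove neighbor convention, taking a base dataset $D$ of size $n$ and the neighbor $D' = D \cup \{x^*\}$ of size $n+1$, and then establish both $\Pr[\cM(D) \in S] \le e^{\epsilon}\Pr[\cM(D') \in S] + \delta$ and the reverse inequality for the target values $\epsilon = \ln\frac{n+1}{n+1-k}$, $\delta = k/n$.

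The first step is to write down the pointwise output probabilities. For any size-$k$ subset $A$ with $x^* \notin A$ one has $\Pr[\cM(D)=A] = 1/\binom{n}{k}$ and $\Pr[\cM(D')=A] = 1/\binom{n+1}{k}$; for $A$ with $x^* \in A$ one has $\Pr[\cM(D)=A] = 0$ while $\Pr[\cM(D')=A] = 1/\binom{n+1}{k}$. The key algebraic identity that drives everything is
\[
\frac{\binom{n+1}{k}}{\binom{n}{k}} \;=\; \frac{n+1}{n+1-k}.
\]

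For the $D \to D'$ direction I would split any event $S$ into $S_- = \{A \in S : x^* \notin A\}$ and $S_+ = S \setminus S_-$. Since $\cM(D)$ never produces a subset containing $x^*$, the pointwise ratio above gives $\Pr[\cM(D) \in S] = \frac{n+1}{n+1-k}\Pr[\cM(D') \in S_-] \le e^{\epsilon}\Pr[\cM(D') \in S]$, which supplies the $\epsilon$ inequality with $\delta=0$. For the $D' \to D$ direction I would condition on the indicator $\mathbf{1}[x^* \in \cM(D')]$: on the event $x^* \notin \cM(D')$ (probability $\frac{n+1-k}{n+1}$) the conditional distribution of $\cM(D')$ coincides exactly with the distribution of $\cM(D)$, while the complementary event contributes at most its probability $k/(n+1) \le k/n$. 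This yields $\Pr[\cM(D') \in S] \le \Pr[\cM(D) \in S] + k/n \le e^{\epsilon}\Pr[\cM(D) \in S] + k/n$, so the stated $\delta$ absorbs this direction on its own.

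Combining the two directions yields the advertised $(\ln\frac{n+1}{n+1-k},\,k/n)$-differential privacy. The only points requiring care are that Definition~\ref{def:dp} is effectively two-sided, so the $\epsilon$ and $\delta$ contributions come from different directions and one must take the envelope, and the minor cosmetic slack $k/(n+1) \le k/n$ used to match the stated $\delta$; the underlying combinatorics is otherwise elementary and I do not anticipate a substantive obstacle.
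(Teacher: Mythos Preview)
Your proposal is correct and follows essentially the same route as the paper: both arguments split on whether the distinguished element $x^*$ (the paper's $u$) appears in the sampled subset, use the identity $\binom{n+1}{k}/\binom{n}{k}=\frac{n+1}{n+1-k}$ to extract $\epsilon$ from the ``add'' direction, and bound the probability that $x^*$ is sampled to extract $\delta$ from the ``remove'' direction. The only cosmetic difference is bookkeeping of sizes: the paper fixes $|D|=n$ and treats the add case ($|D'|=n+1$) and remove case ($|D'|=n-1$) separately, obtaining $\delta=k/n$ directly from the latter, whereas you work with the single pair $(n,n+1)$ in both orders and then relax $k/(n+1)\le k/n$; your conditioning on $\mathbf{1}[x^*\in\cM(D')]$ is a slightly cleaner packaging of the paper's event decomposition $R=R_{D'}\cup R_{D\setminus D'}$.
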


\begin{restatable}{theorem}{withreplacement}
[NFDP mechanism: $(\epsilon,\delta)$-differential privacy of sampling {with replacement}]
\label{theorem_sampling2}
Given a training dataset of size $n$, sampling with replacement achieves $((k\ln{\frac{n+1}{n}}, 1-\left(\frac{n-1}{n}\right)^k)$-differential privacy, where $k$ is the subsample size.
\end{restatable}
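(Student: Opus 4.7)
The plan is to leverage Theorem~\ref{theorem_sampling1} at $k=1$ and combine $k$ independent single-record draws via a composition argument that is sharper than basic composition in the $\delta$ term. With-replacement sampling of $k$ records is exactly the iid product of $k$ single draws, so the overall mechanism factorises across coordinates; this independence is what enables a tighter bound on the additive slack.

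First, I would invoke Theorem~\ref{theorem_sampling1} at $k=1$ (where with- and without-replacement coincide): a single uniform draw from a dataset of size $n$ is $(\ln\tfrac{n+1}{n},\tfrac{1}{n})$-differentially private. The crucial structural observation is that the $\delta=1/n$ slack stems from a single atomic ``bad'' event $B_i$, namely that the $i$-th draw equals the distinguishing record $x^{*}$, and that conditionally on $B_i^{c}$ the draw satisfies pure $\ln\tfrac{n+1}{n}$-differential privacy (no additive term).

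I would then tensor this bound over $k$ independent draws. Fix an output event $S$ and set $B=\bigcup_{i=1}^{k}B_i$. Decompose
\begin{equation*}
\Pr[M(D)\in S] = \Pr[M(D)\in S,\,B^c] + \Pr[M(D)\in S,\,B].
\end{equation*}
On $B^c$ the joint density factorises into $k$ per-draw good-event densities, whose per-coordinate likelihood ratio is $\tfrac{n+1}{n}$, so the joint ratio is $\bigl(\tfrac{n+1}{n}\bigr)^{k}=e^{k\ln\frac{n+1}{n}}$ and the first term is at most $e^{k\ln\frac{n+1}{n}}\Pr[M(D')\in S]$. For the second term, the $B_i$ are iid Bernoullis with parameter $1/n$, so $\Pr[B]=1-\prod_{i=1}^{k}\Pr[B_i^c]=1-\bigl(\tfrac{n-1}{n}\bigr)^{k}$, which is exactly the claimed $\delta$. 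Summing the two contributions yields the desired $(k\ln\tfrac{n+1}{n},\,1-(\tfrac{n-1}{n})^{k})$-differential privacy.

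The main obstacle is obtaining the tight $\delta$ rather than the loose $k/n$ that basic composition would give. Plugging the $k=1$ guarantee $(\ln\tfrac{n+1}{n},\tfrac{1}{n})$-DP into the standard additive composition theorem would only produce $\delta=k/n$, strictly larger than $1-(\tfrac{n-1}{n})^k$ for moderate $k$. The improvement hinges on the observation that all $k$ single-draw $\delta$ contributions originate from the \emph{same} atomic event (``the draw equals $x^{*}$''), and that the $k$ iid copies of this event must be aggregated by the exact Bernoulli union probability rather than by a generic union bound. Once that is in place, the remaining bookkeeping---handling the symmetric direction of the DP inequality and noting that sequences containing $x^{*}$ have density zero under one of the two distributions, so they are absorbed cleanly into $B$---follows the same template as in Theorem~\ref{theorem_sampling1}.
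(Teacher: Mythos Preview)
Your argument is correct and lands on the same $(\epsilon,\delta)$ pair, but it is organised differently from the paper's proof. The paper does not invoke Theorem~\ref{theorem_sampling1} or any composition; it argues directly for the $k$-draw mechanism by splitting into the add/remove cases and, in each case, decomposing an arbitrary output event $R$ according to whether the sampled sequence lies in $\Gamma(D)\cap\Gamma(D')$ or not. On the common part the pointwise density ratio is exactly $(|D'|/|D|)^{k}$, giving the multiplicative factor $((n+1)/n)^{k}$ in one direction and $((n-1)/n)^{k}\le 1$ in the other; the remaining mass is bounded by $\Pr[\cM(D)\in\Gamma(D)\setminus\Gamma(D')]=1-((n-1)/n)^{k}$, which furnishes $\delta$. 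Your route instead factorises with-replacement sampling as a product of $k$ i.i.d.\ single draws, imports the $k{=}1$ guarantee, and tensors via the bad event $B=\{\text{some coordinate equals }x^{*}\}$, using independence to compute $\Pr[B]=1-(1-1/n)^{k}$ exactly rather than the union-bound $k/n$ that naive additive composition would give. Substantively the two arguments coincide: your bad set $B$ is precisely the paper's $\Gamma(D)\setminus\Gamma(D')$ in the remove case, and your ``good-event likelihood ratio $((n+1)/n)^{k}$'' is the paper's direct density computation. What your framing buys is an explicit explanation of \emph{why} $\delta$ beats the basic-composition value $k/n$; what the paper's framing buys is self-containment, since it does not need the $k{=}1$ case as an input.
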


\begin{restatable}{lemma}{compare}
Algorithm \ref{alg:FedMD-NF} using sampling with replacement is consistently more private than using sampling without replacement for any $n > 0$ and $0 < k \leq n$.
\label{lemma_compare}
\end{restatable}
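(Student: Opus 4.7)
The plan is to substitute the privacy parameters from Theorems~\ref{theorem_sampling1} and~\ref{theorem_sampling2} and show that sampling with replacement yields both a smaller $\epsilon$ and a smaller $\delta$ than sampling without replacement, so that the usual monotonicity of $(\epsilon,\delta)$-DP (any mechanism that is $(\epsilon,\delta)$-DP is also $(\epsilon',\delta')$-DP for $\epsilon'\geq\epsilon$, $\delta'\geq\delta$) yields the claimed comparison. Concretely, I would need to verify
\[
k\ln\!\frac{n+1}{n}\;\leq\;\ln\!\frac{n+1}{n+1-k}
\qquad\text{and}\qquad
1-\Bigl(\tfrac{n-1}{n}\Bigr)^{k}\;\leq\;\tfrac{k}{n},
\]
for all integers $n\geq 1$ and $0<k\leq n$.

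First I would handle the $\delta$-comparison, which is immediate: the inequality rearranges to $\bigl(1-\tfrac{1}{n}\bigr)^{k}\geq 1-\tfrac{k}{n}$, which is Bernoulli's inequality applied to the base $1-\tfrac{1}{n}\geq 0$ (since $n\geq 1$) with integer exponent $k\geq 0$. Equality holds at $k\in\{0,1\}$ and the inequality is strict for $k\geq 2$, consistent with the numerical trends plotted in Figure~\ref{fig:param}(b).

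Next I would tackle the $\epsilon$-comparison, which after exponentiating becomes the equivalent algebraic statement
\[
\Bigl(1+\tfrac{1}{n}\Bigr)^{k}\;\leq\;\frac{n+1}{n+1-k}.
\]
Here a direct appeal to Bernoulli fails because the right-hand side is not of the form $1+kx$. The cleanest route is induction on $k$. The base case $k=1$ gives equality on both sides. For the inductive step, multiplying the inductive hypothesis by $1+\tfrac{1}{n}$ and requiring the result to be at most $\tfrac{n+1}{n-k}$ reduces, after clearing denominators, to $(n+1)(n-k)\leq n(n+1-k)$, i.e.\ $-k\leq 0$, which holds. This is the step I expect to be the mild technical obstacle, since a naive term-by-term comparison of the binomial expansion of $(1+\tfrac{1}{n})^{k}$ with the geometric series for $\tfrac{n+1}{n+1-k}=\tfrac{1}{1-k/(n+1)}\cdot\tfrac{n+1}{n+1}$ does not line up cleanly; the induction bypasses this.

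Finally, I would combine the two inequalities and invoke the monotonicity of $(\epsilon,\delta)$-DP to conclude that the with-replacement mechanism is at least as private as the without-replacement one for every admissible $(n,k)$, completing the proof of Lemma~\ref{lemma_compare}. I would also briefly note that both inequalities are tight at $k=1$, which matches the intuition that for a single draw the two sampling schemes coincide.
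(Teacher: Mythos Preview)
Your proposal is correct and follows the same high-level route as the paper: compare the two $(\epsilon,\delta)$ pairs coordinatewise and conclude via monotonicity of $(\epsilon,\delta)$-DP. The paper's proof simply records the equality at $k=1$ and then asserts the strict inequalities
\[
k\ln\tfrac{n+1}{n}<\ln\tfrac{n+1}{n+1-k}
\quad\text{and}\quad
1-\Bigl(\tfrac{n-1}{n}\Bigr)^{k}<\tfrac{k}{n}
\]
for $1<k\leq n$ without further justification. Your Bernoulli argument for the $\delta$-inequality and your induction on $k$ for the $\epsilon$-inequality (reducing the step to $(n+1)(n-k)\leq n(n+1-k)$, i.e.\ $-k\leq 0$) are exactly the missing verifications, so your write-up is strictly more complete than the paper's while not differing in spirit.
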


All the related proofs of lemma and theorems can be referred to the Appendix. The nice property of NFDP with random sampling once and the post-processing property of differential privacy~\cite{dwork2014algorithmic} removes the privacy dependence on the number of queries on the public dataset, allowing a more practical deployment of our NFDP in FEDMD.

\begin{figure*}[t]
\centering
\subfloat[$\epsilon=k\ln{\frac{n/N+1}{n/N}}$]{\includegraphics[width=2.2in]{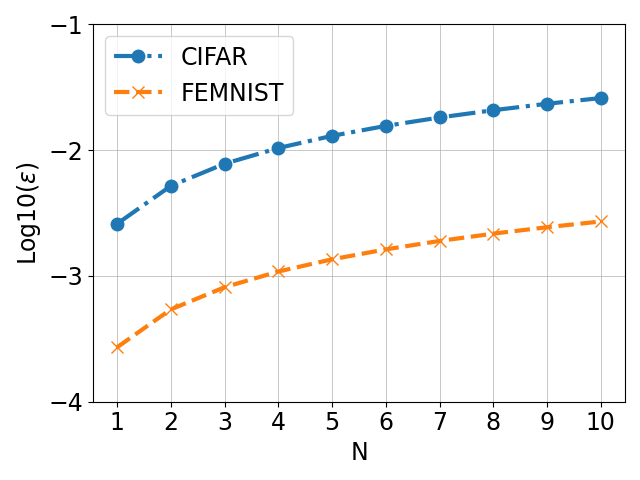}}
\subfloat[$\delta=1-\left(\frac{n/N-1}{n/N}\right)^k$]{\includegraphics[width=2.2in]{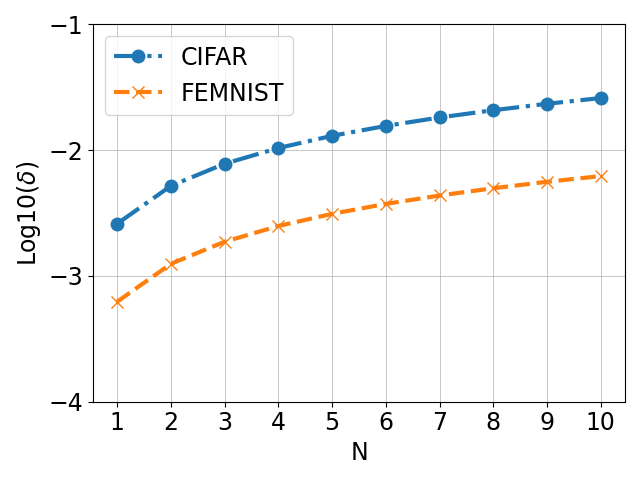}}\\
\subfloat[$\sigma=c_2\frac{\sqrt{T\log(1/\delta)}}{\epsilon}$]{\includegraphics[width=2.2in]{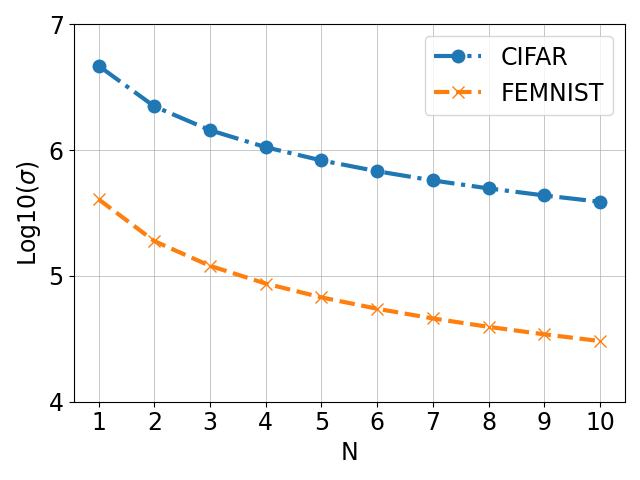}}
\subfloat[Accuracy v.s. Number of Parities]{\includegraphics[width=2.2in]{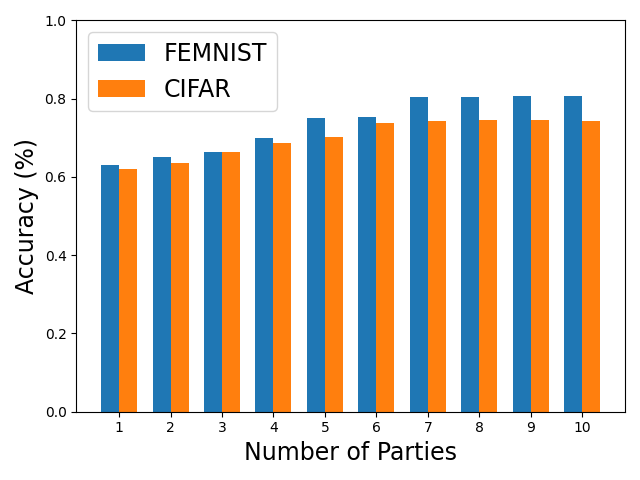}}
\caption{[a-c]: Log 10 scale of $\epsilon$, $\delta$ for each party in \system/\LDP\ v.s. the number of parities $N$. Note that, we use fixed $k=3$. Here $T$ is the total number of queries and we have $T=100000$ in 20 round communications, i.e., ~5000 queries on the public dataset in each round. $n$ is the size of private datasets owned by all parties, which is 28800 for FEMNIST and 3000 for CIFAR-10. The log 10 scale of $\delta$ of \LDP\ is calculated by $\epsilon$ and $\delta$, which are from (a) and (b) by the corresponding $N$. [d]: accuracy v.s. number of parties.} \label{fig:parties}
\end{figure*}

\section{Experimental Evaluation}

\begin{table}[b]
  
  \centering
\resizebox{3.350in}{!}{%
\begin{tabular}{c|c|c}
    \hline
 Task & Public & Private \\
 \hline
 IID &MNIST  & FEMNIST letters [a-f] classes \\
 \hline
 IID   & CIFAR-10 & CIFAR-100 subclasses [0,2,20,63,71,82] \\ \hline
 Non-IID& MNIST & FEMNIST letters from one writer  \\\hline
 Non-IID  &  CIFAR-10 & CIFAR-100 superclasses [0-5]  \\
 \hline
\end{tabular}
}
\caption{Summary of datasets} \label{summary-datasets}
\label{table:data_summary}
\end{table}

In the experiment, we evaluate on paired datasets, i.e., MNIST/FEDMNIST and CIFAR-10/CIFAR-100. For MNIST/FEMNIST, the public data is the MNIST, and the private data is a subset of the Federated Extended MNIST (FEMNIST)~\cite{caldas2018leaf}, which is built by partitioning the data in Extended MNIST based on the writer of the digit/character. In the IID scenario, the private dataset of each party is drawn randomly from FEMNIST. In the non-IID scenario,  each party only has letters written by a single writer, and the task is to classify letters by all writers.
 
For CIFAR-10/CIFAR-100, the public dataset is the CIFAR-10, and the private dataset is a subset of the CIFAR-100~\cite{krizhevsky2009learning}, which has 100 subclasses that fall under 20 superclasses, e.g., bear, leopard, lion, tiger, and wolf belong to large carnivores~\cite{li2019fedmd}. In the IID scenario, each party is required to classify test images into correct subclasses. The non-IID scenario is more challenging: each party has data from one subclass per superclass but needs to classify generic test data into the correct superclasses.
Therefore, it necessitates knowledge sharing among parties.

Each party's local model is two or three-layer deep neural networks for both MNIST/FEMNIST and CIFAR-10/CIFAR-100.
All experiments are implemented by using Pytorch.
A single GPU NVIDIA Tesla V100 is used in the experiments.
FEMNSIT and CIFAR-10 can be done within an hour at $N=10$ parties.
A summary of the public and private datasets used in this paper is provided in Table~\ref{summary-datasets}.

In each communication round, we use a subset of size 5000 that is randomly selected from the entire public dataset. We empirically validate \system\ largely reduces the communication cost without degrading the utility.
The number of training epochs in Algorithm~\ref{alg:FedMD-NF} and the batch size in the Digest and the Revisit phase may impact the stability of the learning process. We empirically choose $R=20, T_1=20, T_2=2, T_3=1$ via grid search. We initialize all parties with the same pre-trained model on some labelled data in the same domain. For example, parties training on the private FEMNIST are initialized with the same pre-trained model on some labelled MNIST data.

\begin{figure*}[t]
\centering
\subfloat[FEMNIST]{\includegraphics[width=2.9in]{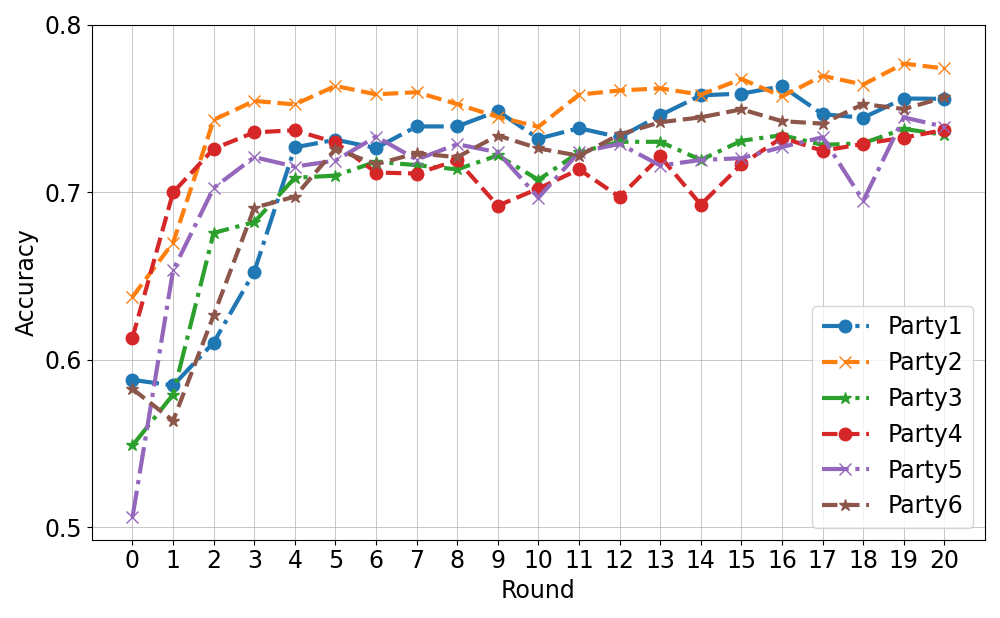}}
\subfloat[CIFAR-10]{\includegraphics[width=2.9in]{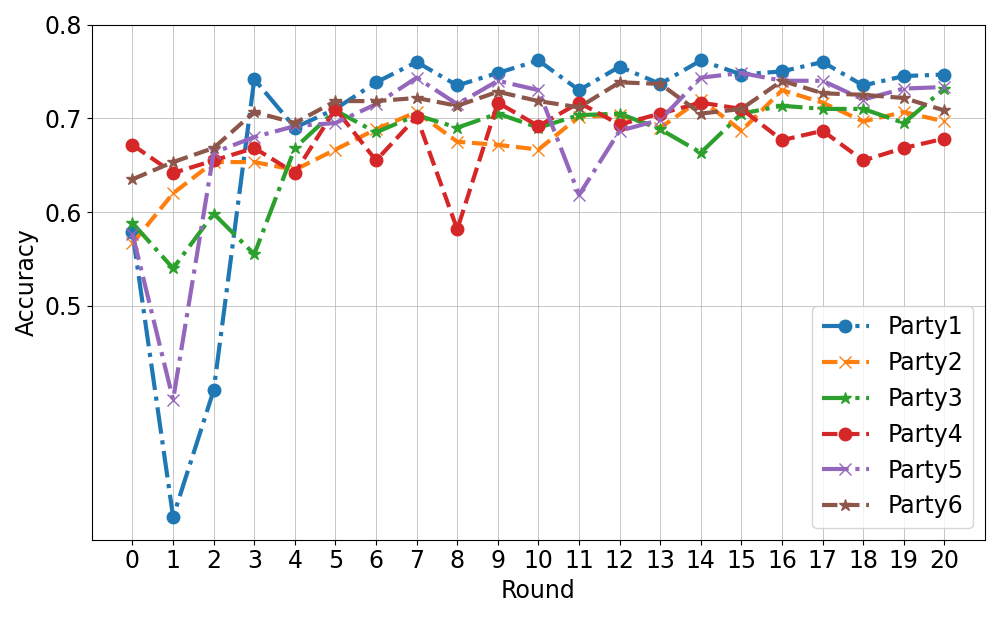}}
\caption{Accuracy v.s. Communication Rounds} \label{fig:epoch}
\end{figure*}

\begin{table*}[t]
\centering
\resizebox{6.350in}{!}{%
\begin{tabular}{l|c|c|c|cll|c|c|c|c}
\cline{1-5} \cline{7-11}
FEMNIST        & k    & $\epsilon$    & $\delta$      & Accuracy &  & CIFAR-10          & k   & $\epsilon$    & $\delta$  & Accuracy \\ \cline{1-5} \cline{7-11} 
FEDMD-NP & 2880 & +$\infty$      & 1          & 96.15\%  &  & FedMD-NP & 300 & +$\infty$      & 1      & 86.88\%  \\ \cline{1-5} \cline{7-11} 
Centralized & 2880 & +$\infty$      & 1          & 98.00\%  &  & Centralized & 300 & +$\infty$      & 1      & 88.83\%  \\ \cline{1-5} \cline{7-11} 
\system\       & 16   & 0.0027 & 0.0062     & 80.64\%  &  & \system\       & 16  & 0.0260 & 0.0583 & 74.40\%  \\ \cline{1-5} \cline{7-11} 
\system\       & 60   & 0.0090 & 0.0206     & 88.06\%  &  & \system\       & 60  & 0.0867 & 0.1815 & 81.58\%  \\ \cline{1-5} \cline{7-11} 
\system\       & 300  & 0.0452 & 0.0989     & 93.56\%  &  & \system\       & 120 & 0.1734 & 0.3301 & 83.57\%  \\ \cline{1-5} \cline{7-11} 
\system\       & 2880 & 0.4342& 0.6321 & 96.63\%  &  & \system\       & 300 & 0.4336 & 0.6327 & 87.38\%  \\ \cline{1-5} \cline{7-11} 
\end{tabular}
}
\caption{Comparisons with All Baselines}\label{tab:compare}
\end{table*}

\subsection{Baselines}
We demonstrate the effectiveness of our proposed \system~by comparison with the following three frameworks. We omit the comparison with FedAvg as it delivers similar utility as the \textit{Centralized} framework.
\begin{enumerate}
    \item \textit{Non-private Federated Model Distillation} (FedMD-NP) framework: all parties train on all their local private data, collaborate the public data distillation, and use the aggregation feedbacks to update the local model as same as FedMD. It should be noted that there is no privacy guarantee in this framework.
    \item \textit{Centralized} framework: 
    the private data of all parties were pooled into a centralized server 
    to train a global model. We use this as an utility upper bound.
    \item \LDP\ framework: \LDP\ requires each party to explicitly add Gaussian noise to locally ensure $(\epsilon,\delta)$-DP before releasing their local model knowledge to the server.
\end{enumerate}

\subsection{Performance Analysis}

\paragraph{Evaluation on privacy budget $\epsilon$.}
It can be observed from Figure \ref{fig:parties}(a) that \system\ can ensure strong privacy protection during training and communication.
For each party in \system\, we fixed $k=3$, which means we only randomly sample three private data points from each private local dataset.
While more parties participate in the training and communication, the number of each private local dataset becomes smaller, since each party's data size equals to the total number of private data $n$ divided by the number of parties $N$.
Due to that, 
when we increase the number of the parties, the privacy budget $\epsilon$ will increase even with a fixed random sample size $k$.
However, 
the $\epsilon$ is still very small, its log 10 scale is close to -2 for CIFAR-10 and -3 for FEMNIST, when the number of parties $N$ is 10.

\paragraph{Evaluation on $\delta$}: Similar to $\epsilon$, $\delta$ is defined based on Theorem \ref{theorem_sampling2}. Figure \ref{fig:parties}(b) shows that the increasing number of parties will increase the $\delta$.
The $\delta$ is small, its log 10 scale is close to -2 for CIFAR-10 and -3 for FEMNIST, when the number of parties $N$ is 10.
Note that, in real life, the private data are collected by each party independently, so more parties would not decrease the local data size in practice.

\paragraph{Evaluation on $\sigma$ in \LDP.}
Besides our proposed approach, the most naive solution is \LDP.
~Unlike \system, Fed-LDP can use all the private dataset for training and only protect the distillation information on the public dataset, as shown in Figure~\ref{fig:CFL_FedMD}(a).
However, FedMD is cursed by a massive number of queries and multi-round communications as per the sequential composition in DP~\cite{dwork2014algorithmic}.
Given the same $ \epsilon $, $\delta$ as in \system, the $\sigma$ is a huge number from 4 to 7 in the log 10 scale, 
compromising the utility of the original information.
Due to this reason, we do not report the experimental results of \LDP\ in this work, since the prediction results are close to random guess with a huge noise scale of $\sigma$ for both FEMNIST and CIFAR-10. 
The only way to maintain utility is to set a very large $\epsilon$ and $\delta$ for \LDP, but it will result in meaningless privacy guarantee.

\paragraph{Evaluation on model convergence.}
Figure \ref{fig:epoch} presents the accuracy trajectories of each party in our \system. As shown in Figure \ref{fig:epoch}, all parties can converge to a decent performance within 20 communication rounds, largely reducing communication cost.
Due to the complexity of the tasks, FEMNIST shows a slightly better performance than CIFAR-10.

\begin{table}[t]
\centering
\resizebox{3.350in}{!}{%
\begin{tabular}{cccccc}
\hline
                      &                          & N  & logits  & softmax & argmax  \\ \hline
\multirow{4}{*}{\system} & \multirow{2}{*}{FEMNIST} & 5  & 74.99\% & 75.02\% & 75.58\% \\ \cline{3-6} 
                      &                          & 10 & 80.74\% & 80.64\% & 81.58\% \\ \cline{2-6} 
                      & \multirow{2}{*}{CIFAR-10}   & 5  & 69.79\% & 70.02\% & 70.01\% \\ \cline{3-6} 
                      &                          & 10 & 74.55\% & 74.88\% & 75.12\% \\ \hline
\end{tabular}
}
\caption{Different Distillation Approaches}\label{tab:distill}
\end{table}

\paragraph{Evaluation on distillation approaches.}
In the original FedMD~\cite{li2019fedmd}, they use logits as the distillation approach. However, in our implementation, besides the logits, we also build the distillation with softmax and argmax approaches.
Softmax approach returns the soft labels and argmax approach returns the hard label for each query.
From Table \ref{tab:distill}, we can see that the results did not differ too much across different approaches in general. However, we recommend argmax label approach 
for both FedMD and our system. There are two main reasons: (1) argmax shows slightly better performance than the other two approaches; (2) more importantly, argmax can save much communication cost of each query. Both softmax and logits need to send the float vectors, but argmax only needs to send the integer during communication.

\paragraph{Evaluation on IID and Non-IID distributions.}
Table \ref{tab:iid} shows the evaluation on Non-IID dataset. 
\system\ can achieve a superior performance with a low privacy cost because of the noise-free differential privacy mechanism.
Compared with IID, non-IID is definitely more challenging due to the incomplete data information of each class.
The detailed settings of our experiments are well introduced in the appendix. 
From the results, we can see that FEMNIST can do better on Non-IID tasks. The main reason is for CIFAR-10 task, we only use one sub-class during training which hardly train the local model well for other classification. 
For example, one party has the wolfs dataset during training, but it is hardly to help classify lions correctly as large carnivores.

\begin{table}[t]
\centering
\resizebox{3in}{!}{%
\begin{tabular}{ccccc}
\hline
&  & N  & IID  & Non-IID  \\ \hline
\multirow{2}{*}{\system} & \multirow{1}{*}{FEMNIST}  
& 10 & 81.58\% & 78.36\%  \\ \cline{2-5} 
& \multirow{1}{*}{CIFAR-10}   
& 10 & 75.12\% & 53.18\%\\ \hline
\end{tabular}
}
\caption{IID vs Non-IID}\label{tab:iid}
\end{table}

\paragraph{Evaluation on number of parities.}
It is not hard to see that more parties can help improve the utility in the federated learning.
Figure \ref{fig:parties}(d) shows that more collaboration can effectively improve the performance of each local model.
Although we only have 10 parties in total, but it already can achieve a good performance on complex image dataset, i.e. CIFAR-10.
Compared with \system, previous private collaborate learning framework requires at least hundreds of parties to be robust to the noise perturbation from previous DP and LDP mechanisms \cite{papernot2017semi,geyer2017differentially,bhowmick2018protection,sun2020ldp}.
In that sense, \system\ also is the work that firstly provides a private collaboration system with a small number of parties.

\paragraph{Comparison on with replacement and without replacement samplings.}
The results in Figure \ref{fig:param} demonstrate the correctness of the lemma \ref{lemma_compare}. It is not hard to see that, while we fix the number of the parties, the increasing number of sampling examples will require a larger privacy budget. Meanwhile, the without replacement sampling spends much more privacy loss than with replacement sampling with the same size of the sampled subset.
Furthermore, we evaluate the performance of both sampling strategies, and the results are shown in Table \ref{tab:compare}. From the results, we find there is no much difference between these two sampling strategies. In summary, we recommend using with replacement sampling for private model training, since it costs less privacy budget than without replacement sampling but achieves the same performance. 

\begin{figure}[t]
\centering
\subfloat[n=100]{\includegraphics[width=1.7in]{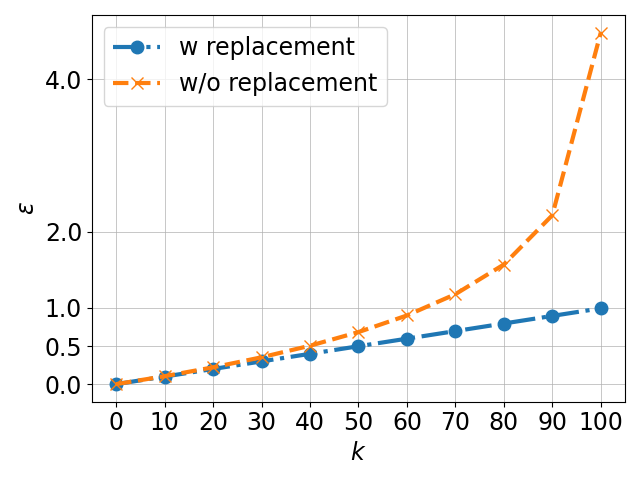}}
\subfloat[n=100]{\includegraphics[width=1.7in]{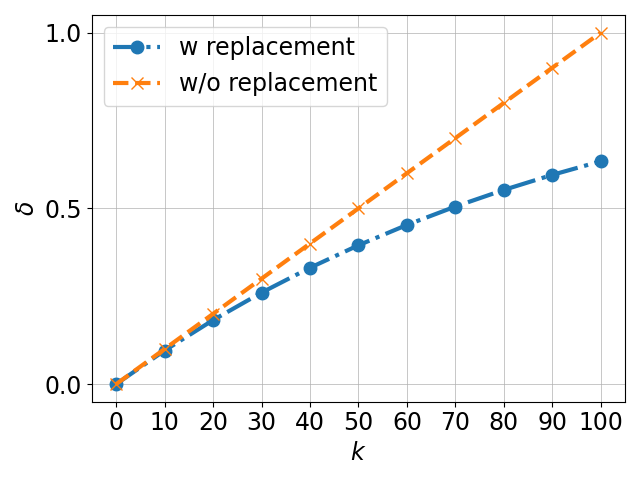}}
\caption{$\epsilon$ and $\delta$ comparison between without (i.e., w/o) replacement and with (i.e., w) replacement. k is the size of the sampled subset.} \label{fig:param}
\end{figure}

\begin{table}[t]
\centering
\resizebox{3.37in}{!}{%
\begin{tabular}{ccccc}
\hline
&  & k  & w replacement  & w/o replacement  \\ \hline
\multirow{2}{*}{\system} & \multirow{1}{*}{FEMNIST}  
& 300 & 93.56\% & 93.63\%  \\ \cline{2-5} 
& \multirow{1}{*}{CIFAR-10}   
& 60 & 81.58\% & 81.13\%\\ \hline
\end{tabular}
}
\caption{With replacement vs without replacement sampling}\label{tab:wo}
\end{table}


\paragraph{Comparison with baselines.}
Table \ref{tab:compare} shows that \system\ can achieve a superior performance with a low privacy cost because of the noise-free differential privacy mechanism.
For all methods, we report the average accuracy of 10 parities.
When we increase the privacy budget, it can even outperform the FEDMD-NP approach and be comparable to the Centralized approach.
Meanwhile, 
we observe that given a very small ($\epsilon$,$\delta$) with $k=16$, we can still achieve a competitive performance on CIFAR-10. None of the previous works related to differential privacy in federated learning \cite{geyer2017differentially,bhowmick2018protection,sun2020ldp} can achieve comparable performance on CIFAR-10 with such a small $\epsilon$ as ours.

Note that, we did not list the utility of the \LDP, since the utility of each model is close to random guess while we use the same privacy budget as \system. Put in another way, \LDP\ requires an extremely large noise scale to ensure the same level of ($\epsilon,\delta$)-DP as \system, resulting in poor utility. However, it no longer becomes a problem in \system. After the random sampling with replacement approach in the first step, \system\ is already $((k\ln{\frac{n+1}{n}}, 1-\left(\frac{n-1}{n}\right)^k)$-differential privacy due to the post-processing property \cite{dwork2014algorithmic}.

\paragraph{Comparison with previous works.}
Our results are competitive comparing to the previous works that adopt CDP and LDP.
Currently, most of the popular differential privacy approaches, such as \DPSGD\ \cite{abadi2016deep}, \PATE\ \cite{papernot2017semi}, are cursed by the number of queries and communication rounds when they are applied to FL.
Since each query touches the private information, 
the large number of queries 
will cost huge privacy budget in FL with multi-round communications.


\section{Discussion}
\paragraph{Diversity of public dataset.}
In federated model distillation, the public dataset requires careful deliberation and even prior knowledge on parties' private datasets. The distribution of the unlabeled public dataset could either match, or differ from the distribution of the private training data available at the parties to some degree. It needs to know how the gap widens when the public dataset becomes more different from the training dataset, the worst case could be from different domains without any overlap.
There is also a potential to use the synthetic data from a pre-trained generator (e.g. GAN) as public data to alleviate potential limitations (e.g. acquisition, storage) of real unlabeled datasets. This may open up numerous possibilities for effective and efficient model distillation.

\paragraph{Diversity of local models.}
\system\ allows local models in FL to not only differ in model structure, size, but also numerical precision, offering great benefit for the Internet of Things (IoT) that involves edge devices with diverse hardware configurations and computing resources.

\paragraph{Weighted aggregation.}
The aggregation step in Algorithm~\ref{alg:FedMD-NF} is based on directly averaging of parties' predictions, i.e., $f_\mathsf{Aggreg}$ corresponds to the average function with equal weight $1/N$, where $N$ is the number of parties. However, parties may contribute to the consensus differently, especially in the extreme cases of model and data heterogeneity. Allocating all the parties with the same weight may negatively impact system utility. We remark that there 
may exist more advanced weighted average algorithms that can further boost utility. These weights can be used to quantify the contributions from local models, and play important roles in dealing with extremely different models.

\paragraph{Limitations.}
Based on the privacy analysis of NFDP mechanisms, for both with replacement and without replacement sampling strategies, we require each local party has an adequate size of the dataset.
For example, if each local data only has one label, our mechanism can not protect any privacy due to the private training data's size limitation.

In this case, NFDP could be very useful for three scenarios.
First, one party is required to provide machine learning as a service (MLaaS) for others, i.e., teacher-student learning framework. While this party contains a large size of private data, NFDP could help it train a privacy guaranteed model.
Second, one party has a large dataset, but the data itself is lack of diversity. The model still can not achieve a good performance due to the data diversity limitation. In this case, they need to communicate with others for a better model utility, and we can use NFDP to protect them during their communications.
Finally, some learning tasks only require a small fraction of the private training data, such as FedMD \cite{vinyals2016matching,snell2017prototypical}.
NFDP can perform well on these tasks with adequate privacy protection.
Besides FedMD, traditional one-shot learning and few-short learning tasks are also suitable to use NFDP for privacy protection for the same reason.

\section{Related Work}

\subsection{Differential Privacy}

Differential privacy \cite{Dwork2006our,dwork2014algorithmic} provides a mathematically provable framework to design and evaluate a privacy protection scheme. 
Recently, differential privacy has been applied to FL~\cite{bhowmick2018protection,geyer2017differentially,mcmahan2018learning}. Previous works mostly focus on the centralized differential privacy mechanism that requires a trusted party~\cite{geyer2017differentially,mcmahan2018learning,yang2021secure}, or local differential privacy, in which each user randomizes its gradients locally before sending it to an untrusted aggregator~\cite{sun2020ldp}, or the hybrid mechanism by combining distributed differential privacy (DDP) with crypto system~\cite{lyu2020lightweight}.

\subsection{Knowledge Distillation}
Knowledge distillation~\cite{bucilu2006model,hinton2015distilling} is originally designed to extract class probability produced by a large DNN or an ensemble of DNNs to train a smaller DNN with marginal utility loss. 
It also offers a powerful tool to share knowledge of a model through its predictions.
Knowledge of ensemble of teacher models has been used to train a student model in previous works~\cite{hamm2016learning,papernot2017semi,wang2019private,sun2020differentially}. For example, Papernot el. al.~\cite{papernot2017semi} proposed PATE, a centralized learning approach that uses ensemble of teachers to label a subset of unlabeled public data in a differentially private manner, then trains a student in a semi-supervised fashion~\cite{dwork2014algorithmic}. We remark that our focus is fundamentally different from the setting of PATE, which requires a trusted aggregator to aggregate the prediction label made by the teacher ensemble and conduct DP mechanisms.

\subsection{Federated Learning}
Federated learning (FL) has emerged as a promising collaboration paradigm by enabling a multitude of parties to jointly construct a global model without exposing their private training data. In FL, parties do not need to explicitly share their training data, they have full autonomy for their local data. FL generally comes in two forms~\cite{mcmahan2017communication}: FedSGD, in which each client sends every SGD update to the server, and FedAVG, in which clients locally batch multiple iterations of SGD before sending updates to the server, which is more communication efficient.

More recently, FedMD~\cite{li2019fedmd} and Cronus~\cite{chang2019cronus} attempted to apply knowledge distillation to FL by considering knowledge transfer via model distillation, in which, the logits on an unlabeled public dataset from parties' models are averaged. In FedMD, each model is first trained on the public data to align with public logits, then on its own private data. In contrast, Cronus mixes the public dataset (with soft labels) and local private data, then trains local models simultaneously. One obvious benefit of sharing logits is the reduced communication costs, without significantly sacrificing utility. However, both works did not offer any theoretical privacy guarantee for sharing model prediction.

\section{Conclusion}
In this work, we formulate a new federated model distillation framework with noise-free differential privacy guarantee for each party. We formally prove that NFDP both with replacement and without replacement sampling can inherently ensure $(\epsilon,\delta)$-differential privacy, eliminating explicitly noise addition and privacy cost explosion issues in the previous works.
Empirical results on various datasets, settings, and heterogeneous model architectures demonstrate that our framework achieves comparable utility by using only a few private samples that are randomly sampled from each party, confirming the effectiveness and superiority of our framework.

In the future, we hope NFDP could support more privacy-preserving machine learning methods, such as semi-supervised learning, pre-training learning, meta-learning, and few-shot learning.
Another direction is that we could optimize the random sampling approach with the advanced data analysis for a better promising and practical privacy guarantee mechanism.
Last but not least, we could use the NFDP mechanism with advanced machine learning methods to support more applications in real life, such as natural language processing, graph analysis, and medical diagnosis.

\bibliography{ref.bib}
\bibliographystyle{named}

\newpage

\appendix

\section*{Appendix}
\label{appendix-theorem_sampling}

In this appendix, we first 
prove the privacy guarantee of random sampling without the replacement and then with the replacement.

\withoutreplacement*
\begin{proof}
$D$ and $D'$ are two neighbouring datasets in the data space $\cD$. $|D|=n$ is the size of the dataset. There are two cases including $D = D' \cup \{u\}$ and $D' = D \cup \{u\}$, where $u$ is the the additional sample.
Let $\cM$ be the random sample mechanism that randomly returns a subset of the data without replacement here.
Let $\cS$ denotes the all subsets in the joint domain of $\cM(D)$ and $\cM(D')$.
Then, we use $\Gamma(D)$, $\Gamma(D')$ denote all subsets of $\cM(D)$ and $\cM(D')$ respectively. $S \in \cS$ is a subset in the domain, where $|S|=k$ denotes the size of the subset.
Then, for a random subset $S$, we have,
\begin{align}
 &   \text{Pr}(\cM(D)=S)= 
    \begin{cases}
     \frac{1}{{|D|\choose k}}, &\text{ if } S \in \Gamma(D), \\
     0, &\text{ otherwise}. 
    \end{cases} \\
&    \text{Pr}(\cM(D^{\prime})=S)= 
    \begin{cases}
     \frac{1}{{|D^{\prime}|\choose k}}, &\text{ if } S \in \Gamma(D'), \\
     0, &\text{ otherwise}. 
    \end{cases}
\end{align}

\noindent \textbf{Case 1 ($D' = D \cup \{u\}$):}
Due to $D \subseteq D'$, then we have,
\begin{align}
& \Pr(\cM(D) \in \Gamma(D)) = 1, \\
& \Pr(\cM(D') \in \Gamma(D)) = \frac{{|D|\choose k}}{{|D'|\choose k}}.
\end{align}
Let $R$ is a random subset of $\cS$ and $R$ is composed by two disjoint subsets, i.e.,
$R = R_D \cup R_{D' \setminus D}$, where $R_D \subseteq \Gamma(D)$ and $R_{D' \setminus D} \in \Gamma(D') \setminus \Gamma(D)$. Then, we have 
\begin{align}
& \text{Pr}(\cM(D)\in R) \\
=& \text{Pr}(\cM(D)\in R_{D}) + \text{Pr}(\cM(D)\in R_{D'\setminus D}) \\
=& \text{Pr}(\cM(D)\in R_{D}) + 0 \\
=& \text{Pr}(\cM(D) \in R_{D}) \\
=& \text{Pr}(\cM(D') \in R_{D}) \cdot \frac{{|D^{\prime}| \choose k}}{{|D| \choose k}} \\
=& \text{Pr}(\cM(D^{\prime})\in R_{D})\cdot \frac{{n+1 \choose k}}{{n \choose k}}  \\
=& \text{Pr}(\cM(D^{\prime})\in R_{D})\cdot \frac{n+1}{n+1-k} \\
\leq& \text{Pr}(\cM(D^{\prime})\in R)\cdot \frac{n+1}{n+1-k} \\
\end{align}

\noindent \textbf{Case 2 ($D = D' \cup \{u\}$):}
Due to $D' \subseteq D$, then we have,
\begin{align}
& \Pr(\cM(D) \in \Gamma(D')) = \frac{{|D'|\choose k}}{{|D|\choose k}}, \\
& \Pr(\cM(D') \in \Gamma(D')) = 1.
\end{align}
Let $P$ is a subset of $\Gamma(D) \setminus \Gamma(D')$, then we have
\begin{align}
\Pr(\cM(D) \in P) & \leq \Pr(\cM(D) \in \Gamma(D) \setminus \Gamma(D')), \\
                    &\leq 1-\frac{{n-1 \choose k}}{{n \choose k}}=\frac{k}{n} .
\end{align}

Let $R$ is a random subset of $\cS$ and $R$ is composed by two disjoint subsets, i.e.,
$R = R_{D'} \cup R_{D \setminus D'}$, where $R_{D'} \subseteq \Gamma(D')$ and $R_{D \setminus D'} \subseteq \Gamma(D) \setminus \Gamma(D')$. Then, we have 
\begin{align}
& \text{Pr}(\cM(D)\in R) \\
=& \text{Pr}(\cM(D)\in R_{D'}) + \text{Pr}(\cM(D)\in R_{D\setminus D'}) \\
\leq& \text{Pr}(\cM(D)\in R_{D'}) + \frac{k}{n} \\
\leq& \text{Pr}(\cM(D') \in R_{D'}) \cdot \frac{{|D'| \choose k}}{{|D| \choose k}} + \frac{k}{n}\\
\leq& \text{Pr}(\cM(D^{\prime})\in R_{D'})\cdot \frac{{n-1 \choose k}}{{n \choose k}}  + \frac{k}{n}\\
\leq& \text{Pr}(\cM(D^{\prime})\in R_{D'})\cdot \frac{n-k}{n} + \frac{k}{n}\\
\leq& \text{Pr}(\cM(D^{\prime})\in R)\cdot \frac{n-k}{n} + \frac{k}{n}\\
\end{align}

Now, we merge the Case 1 and 2 together. Then we have $e^\epsilon=\max(\frac{n+1}{n+1-k}, \frac{n-k}{n})=\frac{n+1}{n+1-k}$ and $\delta=\max(0, \frac{k}{n})=\frac{k}{n}$.
Therefore, NFDP without replacement statisfies $(\ln{\frac{n+1}{n+1-k}}, \frac{k}{n})$-differential privacy.
\end{proof}

\withreplacement*
\begin{proof}
Here we use the same notation as the last proof.
The proof of replacement is almost similar to the without replacement.
First, for a random subset $S \in \cS$, we have 
\begin{align}
 &   \text{Pr}(\cM(D)=S)= 
    \begin{cases}
     \frac{1}{|D|^k}, &\text{ if } S \in \Gamma(D), \\
     0, &\text{ otherwise}. 
    \end{cases} \\
&    \text{Pr}(\cM(D^{\prime})=S)= 
    \begin{cases}
     \frac{1}{|D^{\prime}|^k}, &\text{ if } S \in \Gamma(D'), \\
     0, &\text{ otherwise}. 
    \end{cases}
\end{align}

\noindent \textbf{Case 1 ($D' = D \subseteq \{u\}$):}
Due to $D \in D'$, then we have,
\begin{align}
& \Pr(\cM(D) \in \Gamma(D)) = 1, \\
& \Pr(\cM(D') \in \Gamma(D)) = \frac{|D|^k}{{|D'|^k}}.
\end{align}
Let $R$ is a random subset of $\cS$ and $R$ is composed by two disjoint subsets, i.e.,
$R = R_D \cup R_{D' \setminus D}$, where $R_D \subseteq \Gamma(D)$ and $R_{D' \setminus D} \in \Gamma(D') \setminus \Gamma(D)$. Then, we have 
\begin{align}
& \text{Pr}(\cM(D)\in R) \\
=& \text{Pr}(\cM(D)\in R_{D}) + \text{Pr}(\cM(D)\in R_{D'\setminus D}) \\
=& \text{Pr}(\cM(D)\in R_{D}) + 0 \\
=& \text{Pr}(\cM(D) \in R_{D}) \\
=& \text{Pr}(\cM(D') \in R_{D}) \cdot \frac{{|D^{\prime}|^k}}{{|D|^k}} \\
=& \text{Pr}(\cM(D^{\prime})\in R_{D})\cdot \frac{{(n+1)^k}}{{n^k}}  \\
\leq& \text{Pr}(\cM(D^{\prime})\in R)\cdot \left(\frac{{n+1}}{{n}}\right)^k \\
\end{align}

\noindent \textbf{Case 2 ($D = D' \cup \{u\}$):}
Due to $D' \subseteq D$, then we have,
\begin{align}
& \Pr(\cM(D) \in \Gamma(D')) = \frac{{|D'|^k}}{|D|^k}, \\
& \Pr(\cM(D') \in \Gamma(D')) = 1.
\end{align}
Let $P$ is a subset of $\Gamma(D) \setminus \Gamma(D')$, then we have
\begin{align}
\Pr(\cM(D) \in P) & \leq \Pr(\cM(D) \in \Gamma(D) \setminus \Gamma(D')), \\
                    &\leq 1-\left(\frac{n-1}{n}\right)^k.
\end{align}

Let $R$ is a random subset of $\cS$ and $R$ is composed by two disjoint subsets, i.e.,
$R = R_{D'} \cup R_{D \setminus D'}$, where $R_{D'} \subseteq \Gamma(D')$ and $R_{D \setminus D'} \subseteq \Gamma(D) \setminus \Gamma(D')$. Then, we have 
\begin{align}
& \text{Pr}(\cM(D)\in R) \\
=& \text{Pr}(\cM(D)\in R_{D'}) + \text{Pr}(\cM(D)\in R_{D\setminus D'}) \\
\leq& \text{Pr}(\cM(D)\in R_{D'}) + \left(1-\left(\frac{n-1}{n}\right)^k\right) \\
\leq& \text{Pr}(\cM(D') \in R_{D'}) \cdot \frac{{|D'|^k}}{{|D|^k}} + \left(1-\left(\frac{n-1}{n}\right)^k\right)\\
\leq& \text{Pr}(\cM(D^{\prime})\in R_{D'})\cdot \frac{{(n-1)^k}}{{n^k}}  + \left(1-\left(\frac{n-1}{n}\right)^k\right)\\
\leq& \text{Pr}(\cM(D^{\prime})\in R)\cdot \left(\frac{n-1}{n}\right)^k + \left(1-\left(\frac{n-1}{n}\right)^k\right)\\
\end{align}

Now, we merge the Case 1 and 2 together. Then we have $e^\epsilon=\max(\left(\frac{n+1}{n}\right)^k, \left(\frac{n-1}{n}\right)^k)=\left(\frac{n+1}{n}\right)^k$ and $\delta=\max(0, 1-\left(\frac{n-1}{n}\right)^k)=1-\left(\frac{n-1}{n}\right)^k$.
Therefore, NFDP with replacement statisfies $(k\ln{\frac{n+1}{n}}, 1-\left(\frac{n-1}{n}\right)^k)$-differential privacy.
\end{proof}

\compare*
\begin{proof}
Sampling with replacement is $(k\ln{\frac{n+1}{n}}, 1-\left(\frac{n-1}{n}\right)^k)$-differential privacy and sampling without replacement is $(\ln{\frac{n+1}{n+1-k}}, k/n)$-differential privacy.
Let $n \geq 1$, and then if $k=0$ or $k=1$,
\begin{align}
  \epsilon: & k\ln{\frac{n+1}{n}} = \ln{\frac{n+1}{n+1-k}} \\
    & \text{and} \\
    \delta: & 1-\left(\frac{n-1}{n}\right)^k = k/n
\end{align}

If $1 < k \leq n$, we have
\begin{align}
  \epsilon: & k\ln{\frac{n+1}{n}} < \ln{\frac{n+1}{n+1-k}} \\
    & \text{and} \\
    \delta: & 1-\left(\frac{n-1}{n}\right)^k < k/n
\end{align}
Briefly, we can prove above two inequalities by mathematical induction. First, if $k = 2$, above two inequalities are correct. Then we assume the inequalities are correct while $k=n-1$. Last, we easily prove if $k=n$, the above two inequalities are still correct.
Therefore, for any fixed $n$, $0 \leq k \leq n$, sampling with replacement is more private than Sampling without replacement.
\end{proof}

\end{document}